\newtheorem{Prop}{Proposition}
\newtheorem{Pro}{Problem}
\newtheorem{Rmk}{Remark}
\def\be{ \begin{eqnarray} }
\def\ee{ \end{eqnarray} }
\begin{document}
\title{Delay-Aware Flow Migration for Embedded Services in 5G Core Networks}

%

\author{\IEEEauthorblockN{Kaige Qu\IEEEauthorrefmark{1},
Weihua Zhuang\IEEEauthorrefmark{1}, 
Qiang Ye\IEEEauthorrefmark{1},
Xuemin (Sherman) Shen\IEEEauthorrefmark{1},
Xu Li\IEEEauthorrefmark{2}, 
and Jaya Rao\IEEEauthorrefmark{2}}

\IEEEauthorblockA{\IEEEauthorrefmark{1}Department of Electrical and Computer Engineering, University of Waterloo, Waterloo, Canada} 

\IEEEauthorblockA{\IEEEauthorrefmark{2}Huawei Technologies Canada Inc., Ottawa, ON, Canada
\\Email: \{k2qu, wzhuang, q6ye, sshen\}@uwaterloo.ca, \{Xu.LiCA, jaya.rao\}@huawei.com}}
\maketitle
\begin{abstract}
Service-oriented virtual network deployment is based on statistical resource demands of different services, while data traffic from each service fluctuates over time. In this paper, a delay-aware flow migration problem for embedded services is studied to meet end-to-end (E2E) delay requirement with time-varying traffic. A non-convex multi-objective mixed integer optimization problem is formulated, addressing the trade-off between maximum load balancing and minimum reconfiguration overhead due to flow migrations, under processing and transmission resource constraints and QoS requirement constraints. Since the original problem is non-solvable in optimization solvers due to unsupported types of quadratic constraints, it is transformed to a tractable mixed integer quadratically constrained programming (MIQCP) problem.  The optimality gap between the two problems is proved to be zero, so we can obtain the optimum of the original problem through solving the MIQCP problem with some post-processing. Numerical results are presented to demonstrate the aforementioned trade-off, as well as the benefit from flow migration in terms of E2E delay performance guarantee.
\end{abstract}

\begin{IEEEkeywords}
SDN, NFV, SFC, flow migration, E2E delay, load balancing, reconfiguration overhead, state transfer, MIQCP.
\end{IEEEkeywords}

\IEEEpeerreviewmaketitle

\vspace{-1mm}
\section{Introduction}
\label{sec:Introduction}

The fifth generation (5G) networks will support multiple services with diverse service requirements. Customized network slicing is on a per-service basis over a common physical infrastructure to provide service level performance guarantee, which requires a flexible and programmable network architecture~\cite{Ye2018slicing}. Software defined networking (SDN) and network function virtualization (NFV) are two enabling technologies~\cite{herrera2016resource,nguyen2017sdn}. SDN decouples network control from data plane into a centralized control module, which facilitates global network management and enables network programmability. NFV separates network functions from dedicated hardware to software instances, referred to as virtual network functions (VNFs), operated in virtual machines (VMs) hosted in commodity servers or data centers (DCs), referred to as NFV nodes. With NFV, a network service is represented by a sequence of VNFs, constituting a service function chain (SFC). NFV is to enable elastic scaling of resources allocated to VNFs in a cost-effective manner, which achieves agile deployment and management of services.

With SDN and NFV, service requests are represented in the form of SFCs to satisfy certain expected demands according to service level agreement. Basically, VNFs are operated in VMs, referred to as VNF instances (VNFIs), placed on NFV nodes, and allocated certain processing resources. Traffic between consecutive VNFs is routed over physical paths, i.e., series of switches and links, referred to as tunnels, and allocated certain transmission resources. Both VNFIs and tunnels can be shared among multiple SFCs. This process is referred to as \emph{SFC embedding} in the virtual network planning phase~\cite{cohen2015near,li2017joint,Omar_toappear}. In the virtual network operation phase, traffic for each service arrives and fluctuates over time, possibly overloading some VNFIs and tunnels but underloading others from time to time. Imbalanced load can create bottlenecks on VNFIs and tunnels, leading to E2E delay violation for the affected services~\cite{Qiang_toappear}. To avoid E2E delay violation caused by the load-resource mismatch, we allow each SFC to traverse through alternative VNFIs and tunnels while obeying service chaining requirements, which is called \emph{flow migration}.

Flow migration for embedded SFCs is similar to dynamic SFC embedding, in terms of location change for VNFs and routing change for traffic between consecutive VNFs~\cite{liu2017dynamic,rankothge2017optimizing,eramo2017approach,guo2016dynamic}. However, they are different in timescale. In dynamic SFC embedding, the time-varying processing and transmission resource demands are known a priori, based on which VNFs are re-deployed on NFV nodes via VM migrations and traffic between consecutive VNFs is re-routed, generating new VNFIs and tunnels with different locations and resources~\cite{clark2005live}. The objective is to minimize total resource usage over both NFV nodes and links, considering reconfiguration overhead, i.e., a linear combination of the number of reconfigured NFV nodes and links~\cite{rankothge2017optimizing}. Flow migration operates in a smaller timescale with time-varying traffic, which is rerouted among the predetermined VNFIs and tunnels to satisfy E2E delay requirements. A balanced resource utilization makes the network more tolerant of future traffic changes, thus is considered as an objective for flow migration~\cite{guo2016dynamic}. Moreover, the transfer of VNF states should be considered, since simply rerouting the in-progress flows on a stateful VNF introduces state inconsistency, causing processing inaccuracy. For example, pattern matchings in intrusion detection systems (IDSs) should be transfered to the target VNFI on the rerouted path for accurate attack detection~\cite{gember2014opennf}. Some frameworks such as OpenNF are proposed to solve the state inconsistency problem, by moving not only the flow but also the VNF states~\cite{gember2014opennf}. Therefore, we also consider to minimize the number of migrations to reduce the reconfiguration overhead incurred by state transfers~\cite{nobach2017statelet}.

In this paper, we study a delay-aware flow migration problem for embedded SFCs in a processing-limited network with a focus on processing delay over VNFIs and ignoring transmission delay on tunnels, to achieve load balancing and to minimize reconfiguration overhead, under 1) service chaining requirements, 2) processing and transmission resource constraints, and 3) E2E delay requirements. The problem is formulated as a multi-objective mixed integer optimization problem. Our objective function to be minimized is a linear combination of 1) the maximum processing resource utilization ratio among VNFIs for load balancing, 2) the number of migrations for reduction of reconfiguration overhead due to state transfers, and 3) the number of extra flow rerouting tunnels for better utilization of available tunnels after SFC embedding. For delay awareness, we include the E2E delay requirements in the constraints with an assumption of prior knowledge of time-varying traffic rates. Due to several quadratic constraints, the optimization problem is non-solvable in optimization solvers such as Gurobi~\cite{gurobi}, so we transform it into a tractable MIQCP problem, and prove the optimality gap between the transformed problem and the original problem. An algorithm is proposed to obtain the optimal solution to the original problem given an optimal solution to the MIQCP problem.
\vspace{-1.5mm}

\section{System Model}
\label{sec:System Model} 

\subsection{Service Requests}
\label{sec:Service Requests}

\begin{figure}
\centering
\setlength{\belowcaptionskip}{-0.4cm}
{\includegraphics[width=1.0\linewidth]{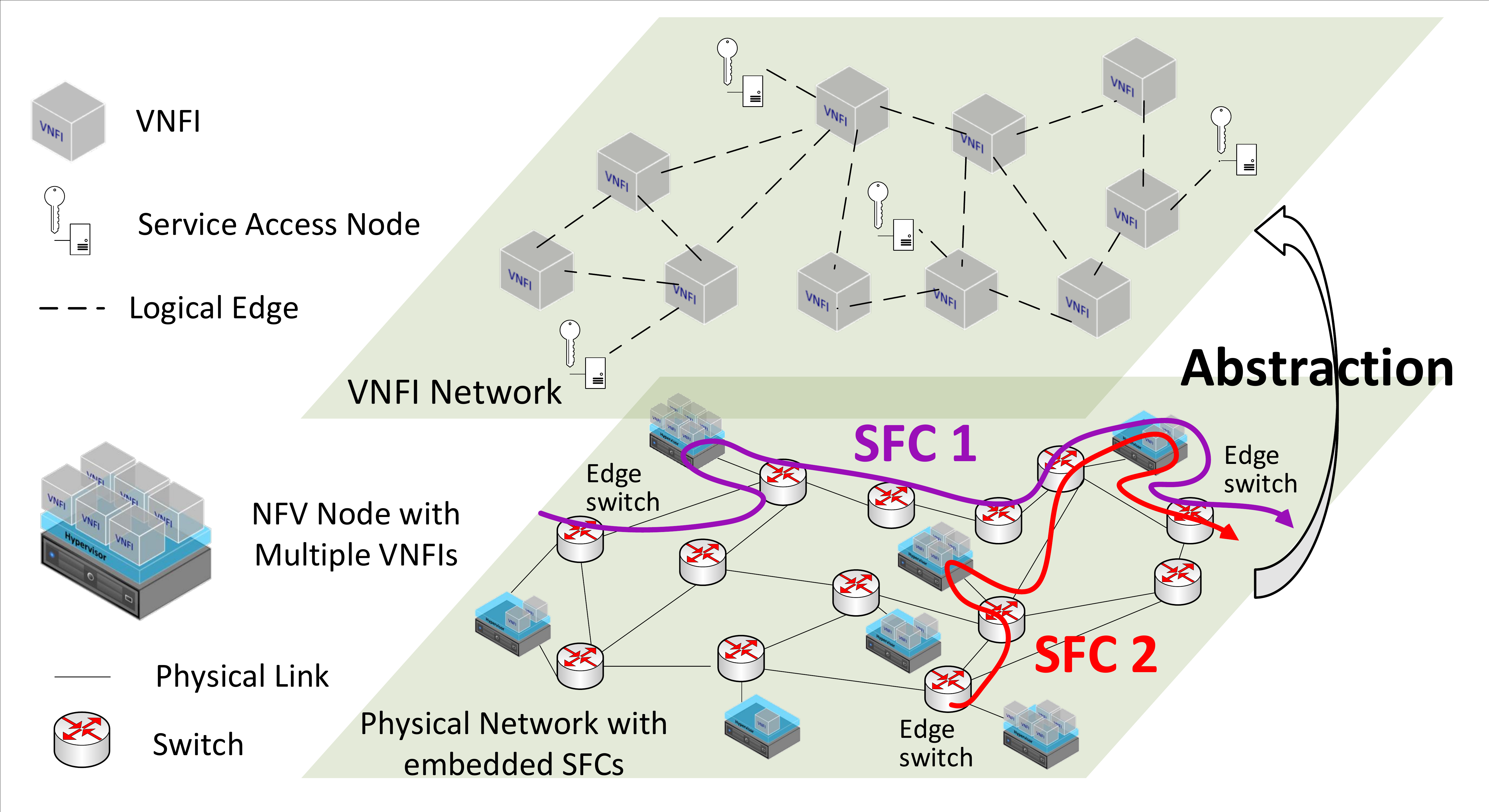}}
\caption{VNFI network abstraction.}\label{fig:abstraction}
\end{figure}

Let $\mathcal{R}$ denote the set of service requests. A service request, $r \in \mathcal{R}$, in the form of SFC, is represented as $\left\{a^{(r)}, V_{1}^{(r)},\cdots, V_{{H_r}}^{(r)}, b^{(r)}, D^{(r)}, \lambda^{(r)}(t)\right\}$. It traverses from source $a^{(r)}$, through $H_r$ VNFs in sequence, towards destination $b^{(r)}$, with $D^{(r)}$ and $\lambda^{(r)}(t)$ being respectively the E2E delay requirement and the traffic rate in packet/s during time interval $t$ in a time slotted system. Letting $\mathcal{H}_r =\{1,\cdots,H_r\}$, the $h$-th ($h \in \mathcal{H}_r$) VNF in SFC $r$ is denoted as $V_h^{(r)}$. Letting $\mathcal{F}$ be the set of VNF types supported by the network, e.g., firewall and IDS, we assume that there is at most one VNF of type $f \in \mathcal{F}$ in one SFC. A virtual link is a directed logical connection between two consecutive VNFs or between source/destination and its neighboring VNF. The $h$-th ($h\in \{0\}\cup\mathcal{H}_r$) virtual link in SFC $r$ is denoted as $L_h^{(r)}$, with $L_0^{(r)}$ connecting source $a^{(r)}$ and VNF $V_1^{(r)}$, $L_h^{(r)}$ ($h\in \mathcal{H}_r \backslash H_r$) connecting VNF $V_h^{(r)}$ and VNF $V_{h+1}^{(r)}$, and $L_{H_r}^{(r)}$ connecting VNF $V_{H_r}^{(r)}$ and destination $b^{(r)}$. Let $L$ denote the total number of virtual links in all SFCs. 
\vspace{-4mm}

\subsection{Physical Network with Embedded SFCs}
\label{sec:Physical Network}

The physical network represents a 5G core network, where the network nodes include switches and NFV nodes. Switches can forward traffic from incoming physical links to outgoing physical links. NFV nodes have both forwarding and processing capabilities. Each NFV node can support multiple VMs hosting different functions. The VNFs and virtual links belonging to different SFCs have been embedded to the physical network, based on long-term resource demands. Every VNF is embedded to a single NFV node, but every virtual link is allowed to be embedded to multiple physical paths. For a virtual link connecting two VNFs, its embedded physcial paths must be paths between the two embedded NFV nodes. 
\vspace{-3.5mm}

\subsection{Virtual Network Function Instance Network Abstraction}
\label{sec:Virtual Network Function Instance Network Abstraction}

Assume that all VNFs of type $f \in \mathcal{F}$ embedded to the same NFV node are \emph{mapped} to one VNFI of type $f \in \mathcal{F}$ on the NFV node. We abstract a VNFI network from the physical network with embedded SFCs, and represent it as a directed graph $G=\{\mathcal{I}\cup\mathcal{A},\mathcal{E}\}$, where $\mathcal{I}$ is a set of VNFIs, $\mathcal{A}$ is a set of service access nodes containing all sources and destinations of the embedded SFCs, and $\mathcal{E}$ is a set of logical edges, as illustrated in Fig.~\ref{fig:abstraction}. Let $\mathcal{I}_h^{(r)}$ be a subset of VNFIs of the same type as VNF $V_h^{(r)}$, and $V_i$ be a subset of VNFs of the same type as VNFI $i\in\mathcal{I}$. The processing resource capacity $C_i$ of VNFI $i \in \mathcal{I}$ is the aggregate maximum processing rate in cycle/s of all VNFs mapped to it. Since the processing resource demand is service type and VNF type dependent, we define the processing density of VNFI $i$ for SFC $r$ as $P_i^{(r)}$ in cycle/packet, which is the CPU resource demand in cycle/s on VNFI $i$ as per packet/s of processing rate of SFC $r$~\cite{shin2008dual}. Accordingly, with a CPU share of $\eta_i^{(r)}$ for SFC $r$ on VNFI $i$, the allocated processing rate in packet/s is $\mu_i^{(r)}=\frac{\eta_i^{(r)} C_i}{P_i^{(r)}}$.  

There is a directed logical edge (i.e., tunnel) between $i,j \in \mathcal{I}\cup\mathcal{A}$ only if there is at least one virtual link whose starting point and ending point are respectively mapped to $i$ and $j$. In this case, we say that at least one virtual link is \emph{mapped} to the logical edge from $i$ to $j$. For logical edge $e \in \mathcal{E}$, we define two sets of binary parameters $\{\beta_i^e\}$ and $\{\varphi_i^e\}$ to describe its location and direction, with $\beta_i^e=1$ if $i\in \mathcal{I}\cup\mathcal{A}$ is its starting point and $\varphi_i^e=1$ if $i\in \mathcal{I}\cup\mathcal{A}$ is its ending point.
\vspace{-3.5mm}

\subsection{Reconfiguration Overhead}
\label{sec:Reconfiguration Overhead}

\begin{figure} 
\centering
\setlength{\belowcaptionskip}{-0.45cm}
{\includegraphics[width=0.75\linewidth]{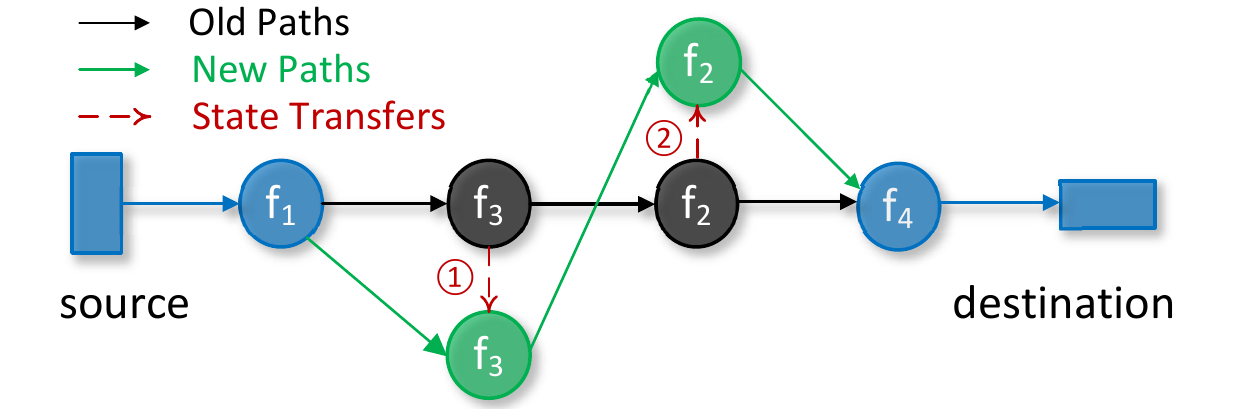}}
\caption{An illustration for flow migration and state transfer.}\label{fig:E-SFC-migration-unicast}
\end{figure}

When an SFC flow is migrated at a stateful VNF, the VNF is remapped to an alternative VNFI, with an associated state transfer, as illustrated in Fig.~\ref{fig:E-SFC-migration-unicast}. State transfers happen between two VNFIs of the same type. However, there is no direct logical edge between such two VNFIs. Therefore, extra logical edges should be created to support state transfers in date plane, incurring transmission resource overhead. In addition, some logical edges on the new paths after rerouting may not exist in the VNFI network, so extra logical edges for flow rerouting should also be created, incurring signaling overhead for the rerouted flows. Therefore, we describe the reconfiguration overhead in two parts: the number of extra logical edges for state transfers (equal to the number of migrations), and the number of extra logical edges for flow rerouting.
\vspace{-3mm}

\section{Problem Formulation}
\label{sec:Problem Formulation}

\begin{Pro}
Consider a VNFI network $G=\{\mathcal{I}\cup\mathcal{A},\mathcal{E}\}$. Packet processing time at a VNFI for a certain SFC is exponentially distributed. During a time interval, $t$, the traffic arrival to SFC $r\in\mathcal{R}$ is Poisson with mean rate $\lambda^{(r)}(t)$ packet/s. A delay-aware flow migration problem is to 1) find the remapping between VNFs and VNFIs in time interval $t$, on the basis of the mapping results in time interval $(t-1)$, and 2) reallocate processing resources on VNFIs, to satisfy the E2E delay requirements without violating the resource constraints. The objective is to achieve load balancing among all the VNFIs, with minimal reconfiguration overhead:
\end{Pro}

{\setlength\abovedisplayskip{-12pt} 
\setlength\belowdisplayskip{1pt}
\be
\begin{aligned}
    \min ~ O(t)=\hspace{0.3mm} \alpha_ 1 \hspace{0.3mm} \eta(t) + \alpha_2 \sum_{r \in \mathcal{R}} \sum_{h \in \mathcal{H}_r} \sum_{i,j \in \mathcal{I}_h^{(r)}} {m_{i\to j}^{rh}(t)} 
    \\
    + \hspace{0.8mm}\alpha_3 [L-\sum_{r \in \mathcal{R}} \sum_{h \in \{0\}\cup\mathcal{H}_r} \sum_{i,j \in \mathcal{I}\cup\mathcal{A}}  {y_{ij}^{rh}(t)} ]. 
    \label{eq-Obj}
\end{aligned}
\ee
}

In objective function~\eqref{eq-Obj}, there are several decision variables which are functions of $t$: 1) continuous function $\eta(t)\in[0,1]$ for resource utilization ratio of the busiest VNFI in the VNFI network during interval $t$; 2) binary function set $\boldsymbol{m}(t)=\{m_{i\to j}^{rh}(t)\}$, with $m_{i\to j}^{rh}(t)=1$ if the mapped VNFI for VNF $V_h^{(r)}$ changes from VNFI $i$ during interval $(t-1)$ to VNFI $j$ during interval $t$, and $m_{i\to j}^{rh}(t)=0$ otherwise; 3) binary function set $\boldsymbol{y}(t)=\{y_{ij}^{rh}(t)\}$, with $y_{ij}^{rh}(t)=1$ if virtual link $L_h^{(r)}$ is mapped to an existing logical edge between $i,j\in\mathcal{I}\cup\mathcal{A}$ during interval $t$, and $y_{ij}^{rh}(t)=0$ otherwise. Note that $\alpha_1,\alpha_2,\alpha_3$ are tunable weights to control the priority of the three components. In the right hand side of \eqref{eq-Obj}, the first term is the cost for imbalanced load among the VNFIs, the second term is the number of migrations, and the third term is the number of extra logical edges for flow rerouting.

Next, for constraints, first consider VNF to VNFI mapping. Define binary function set $\boldsymbol{x}(t) = \{x_i^{rh}(t)\}$, with $x_i^{rh}(t)=1$ if VNF $V_h^{(r)}$ is mapped to VNFI $i \in \mathcal{I}_h^{(r)}$ during interval $t$, and $x_i^{rh}(t)=0$ otherwise. Since VNF $V_h^{(r)}$ should be mapped to exactly one VNFI in $\mathcal{I}_h^{(r)}$, we have constraint 
\be
    \sum_{i \in \mathcal{I}_h^{(r)}} {x_i^{rh}(t)} = 1, \quad \forall (r,h) \in \mathcal{V}.   
    \label{eq-VNFI-mapping}
\ee
From interval $(t-1)$ to interval $t$, VNF to VNFI mapping changes from $\{x_i^{rh}(t-1)\}$ to $\{x_i^{rh}(t)\}$, where $\{x_i^{rh}(t-1)\}$ is a known set for interval $t$, so there is a relationship among $\{m_{i\to j}^{rh}(t)\}$ ($i\neq j$), $\{x_i^{rh}(t-1)\}$ and $\{x_{j}^{rh}(t)\}$, given by
\be 
\begin{aligned} 
    m_{i\to j}^{rh}(t) = x_i^{rh}(t-1) \hspace{0.3mm} x_j^{rh}(t), \quad\quad\quad\quad\quad\quad\quad\quad\quad
    \\
    \quad \forall (r,h) \in \mathcal{V}, ~ \forall i \in \mathcal{I}_h^{(r)}, ~ \forall j \in \mathcal{I}_h^{(r)}\backslash \{i\}.
    \label{eq-migration}
\end{aligned}
\ee
Also, we have 
\be
	m_{i\to i}^{rh}(t)=0, \quad \forall (r,h) \in \mathcal{V},  ~ \forall i \in \mathcal{I}_h^{(r)}.
	\label{eq-no-migration-ii}
\ee

Define nonnegative continuous function set $\boldsymbol{\mu}(t)=\{\mu_{i}^{rh}(t)\}$, with $\mu_{i}^{rh}(t)$ being the processing rate in packet/s allocated to VNF $V_h^{(r)}$ by VNFI $i \in \mathcal{I}_h^{(r)}$ during interval $t$. It should be lower bounded by $x_i^{rh}(t) \hspace{0.3mm} \lambda^{(r)}(t)$ due to the queue stability requirement and upper bounded by $x_i^{rh}(t) \hspace{0.3mm} \frac{C_i}{P_i^{(r)}}$ due to the limited processing capacity, given by 
\be
\begin{aligned} 
    x_i^{rh}(t) \lambda^{(r)}(t) \hspace{-0.6mm}\leq\hspace{-0.6mm} \mu_{i}^{rh}(t) \hspace{-0.6mm}\leq\hspace{-0.6mm} \frac{x_i^{rh}(t)C_i}{P_i^{(r)}}, ~\forall (r,h) \hspace{-0.6mm}\in\hspace{-0.6mm} \mathcal{V}, \hspace{0.5mm} \forall i \hspace{-0.6mm}\in\hspace{-0.6mm} \mathcal{I}_h^{(r)}.\hspace{-0.6mm}
    \label{eq-rate-bound}
\end{aligned}
\ee
Further, considering processing resource constraints with a maximum VNFI utilization ratio of $\eta(t)$, we have
{\setlength\belowdisplayskip{5pt}
\begin{subequations}
\be 
    & &\hspace{-1cm}
    \sum_{(r,h) \in \mathcal{V}_i} {{P_i^{(r)}\mu_{i}^{rh}(t)} \leq \eta(t){C_i}}, \quad \forall i \in \mathcal{I}
    \label{eq-polling}  
    \\
    & &\hspace{-1cm}
    0 \leq \eta(t) \leq 1.
    \label{eq-eta} 
\ee
\label{eq-resource}
\end{subequations}
}
Next, define positive continuous function set $\boldsymbol{d}(t)=\{d_{i}^{rh}(t)\}$, with $d_{i}^{rh}(t)$ denoting the average (dummy) processing delay on the processing queue associated with VNF $V_h^{(r)}$ at VNFI $i$. Since traffic to the queue is Poisson and packet processing time is exponential, the processing system is an M/M/1 queue. The delay $d_{i}^{rh}(t)$ is given by
\be 
\begin{aligned} 
    d_{i}^{rh}(t) \hspace{-0.6mm} = \hspace{-0.6mm} \frac{1}{\mu_{i}^{rh}(t) \hspace{-0.6mm} - \hspace{-0.6mm} x_i^{rh}(t) \lambda^{(r)}(t) \hspace{-0.6mm} + \hspace{-0.6mm} \epsilon}, \hspace{1mm} \forall (r,h) \hspace{-0.6mm} \in \hspace{-0.6mm} \mathcal{V}, \hspace{0.5mm} \forall i \hspace{-0.3mm} \in \hspace{-0.3mm} \mathcal{I}_h^{(r)}
    \label{eq-core-delay}
\end{aligned}
\ee
where $0<\epsilon\ll 1$ is a constant to avoid $d_{i}^{rh}(t)$ being undetermined, and $d_{i}^{rh}(t)$ is a dummy delay only if $x_i^{rh}(t)=0$. There is an extra bound constraint applied to $d_{i}^{rh}(t)$, explicitly showing its relationship with $x_i^{rh}(t)$:
\be 
\begin{aligned} 
    0 < d_{i}^{rh}(t) \leq x_i^{rh}(t) \hspace{0.3mm} D^{(r)} + \left[1-x_i^{rh}(t)\right] \hspace{0.3mm} \frac{1}{\epsilon}, 
    \\
    \forall (r,h) \in \mathcal{V}, ~ \forall i \in \mathcal{I}_h^{(r)}.
    \label{eq-d-bound}
\end{aligned}
\ee
The E2E processing delay of SFC $r \in \mathcal{R}$ should not exceed an upper bound $D^{(r)}$:
{\setlength\belowdisplayskip{-3pt}
\be
    \sum_{h \in \mathcal{H}_r} \sum_{i \in \mathcal{I}_h^{(r)}} {x_i^{rh}(t)\hspace{0.3mm}d_{i}^{rh}(t)} \leq D^{(r)},\quad \forall r \in \mathcal{R}.
    \label{eq-E2E-processing-delay} 
\ee
}

\vspace{-2mm}
According to the definition of $\{y_{ij}^{rh}(t)\}$, we have 
\begin{subequations} 
\be
    & &\hspace{-1cm}
    y_{ij}^{rh}(t) = \sum_{e \in \mathcal{E}} \beta_i^e \hspace{0.3mm} \varphi_j^e \hspace{0.3mm}  x_i^{rh}(t) \hspace{0.3mm}  x_j^{r(h+1)}(t), 
    \nonumber
    \\
    & &\hspace{-0.3cm}
    \forall r \in \mathcal{R}, \forall h \in \mathcal{H}_r\backslash \{H_r\},  \forall i \in \mathcal{I}_h^{(r)}, \forall j \in \mathcal{I}_{h+1}^{(r)}
    \label{eq-extra-edge1}
    \\
    & &\hspace{-1cm}
    y_{{a^{(r)}}j}^{r0}(t) = \sum_{e \in \mathcal{E}} \beta_{a^{(r)}}^e \hspace{0.3mm} \varphi_j^e \hspace{0.3mm} x_j^{r1}(t), \quad \forall r \in \mathcal{R}, \forall j \in \mathcal{I}_1^{(r)}
    \label{eq-extra-edge2}
    \\
    & &\hspace{-1cm}
    y_{i{b^{(r)}}}^{r H_r}(t) = \sum_{e \in \mathcal{E}} \beta_{i}^e \hspace{0.3mm} \varphi_{b^{(r)}}^e \hspace{0.3mm} x_i^{r{H_r}}(t), \quad \forall r \in \mathcal{R}, i \in \mathcal{I}_{H_r}^{(r)}
    \label{eq-extra-edge3}
    \\
    & &\hspace{-1cm}
    y_{ij}^{rh}(t) = 0, \quad otherwise.
    \label{eq-extra-edge4}
\ee
\label{eq-extra-edge}
\end{subequations}
\hspace{-1.5mm}Constraint \eqref{eq-extra-edge1} ensures that $y_{ij}^{rh}(t)$ equal $1$ only if VNF $V_h^{(r)}$ and VNF $V_{h+1}^{(r)}$ are respectively mapped to VNFI $i$ and VNFI $j$ between which a directed logical edge exists. Constraints \eqref{eq-extra-edge2} and \eqref{eq-extra-edge3} are applied to virtual links connecting either sources or destinations. There are exclusive conditions, e.g., $i \notin \mathcal{I}_h^{(r)}$ or $j \notin \mathcal{I}_{h+1}^{(r)}$, to which constraint \eqref{eq-extra-edge4} applies. 

We also consider transmission resource constraints on logical edges, i.e., the aggregated date rate on an existing logical edge should not exceed its transmission resource capacity:
\be
\sum_{r \in \mathcal{R}}\sum_{h \in \{0\}\cup\mathcal{H}_r}\sum_{i,j\in\mathcal{I}\cup\mathcal{A}} \hspace{-1.5mm} {\beta_i^e  \varphi_j^e  y_{ij}^{rh}(t) \lambda^{(r)}(t)\hspace{0.3mm}\sigma^{(r)}} \hspace{-0.6mm} \leq \hspace{-0.6mm} C_e, \hspace{1mm} \forall e \hspace{-0.3mm} \in \hspace{-0.3mm} \mathcal{E}
\label{eq-transmission}
\ee

\vspace{2mm}
Finally, we formulate the problem as
{\setlength\belowdisplayskip{-12pt}
\begin{subequations}
    \be
    & &\hspace{-0.9cm}
    \mathop {\min }\limits_{\scriptstyle \boldsymbol{x},\boldsymbol{m},\boldsymbol{\mu},\eta,\boldsymbol{y},\boldsymbol{d}\hfill} \; \hspace{0.35cm} {O(t) }
    \\
    & &\hspace{-0.4cm} \text{s.t. } \hspace{1.1cm} \eqref{eq-VNFI-mapping}-\eqref{eq-extra-edge}
    \\
    & &\hspace{1.2cm} \boldsymbol{x},\boldsymbol{m},\boldsymbol{y} \in \{0,1\}, \boldsymbol{d} >0.
    \ee
    \label{P1}
\end{subequations}
}

\begin{Rmk}
Problem \eqref{P1} is non-convex and cannot be solved in optimization solvers due to quadratic constraints \eqref{eq-core-delay}, \eqref{eq-E2E-processing-delay} and \eqref{eq-extra-edge1}. In Section~\ref{sec:MIQCP Reformulation}, we reformulate it as an MIQCP problem through constraint transformations~\cite{burer2012non}.
\end{Rmk}
\vspace{-4mm}

\section{MIQCP Reformulation}
\label{sec:MIQCP Reformulation}

By introducing auxiliary nonnegative continuous function set $\boldsymbol{\gamma}(t)=\{\gamma_{i}^{rh}(t)\}$, we get an equivalent linear form of constraint \eqref{eq-E2E-processing-delay} based on big-$\mathbb{M}$ method with $\mathbb{M}=\frac{1}{\epsilon}$ as
{\setlength\belowdisplayskip{-7pt} 
\begin{subequations} 
\be
    & &\hspace{-1.3cm}
    \sum_{h \in \mathcal{H}_r} \sum_{i \in \mathcal{I}_h^{(r)}} {\gamma_{i}^{rh}(t)} \leq D^{(r)},\quad \forall r \in \mathcal{R}
    \\
    & &\hspace{-1.3cm}
    d_{i}^{rh}(t) - \frac{1}{\epsilon}\left[1-x_i^{rh}(t)\right] \leq \gamma_{i}^{rh}(t) \leq d_{i}^{rh}(t), 
	\nonumber
    \\
    & &\hspace{2.1cm}
    \quad \forall (r,h) \in \mathcal{V}, \forall i \in \mathcal{I}_h^{(r)}
    \\
    & &\hspace{-1.3cm}
    0 \leq \gamma_{i}^{rh}(t) \leq \frac{1}{\epsilon} \hspace{0.3mm} x_i^{rh}(t),  \quad \forall (r,h) \in \mathcal{V}, \forall i \in \mathcal{I}_h^{(r)}.
\ee
\label{eq-E2E oricessing delay linear}
\end{subequations}
}

By introducing an auxiliary nonnegative binary decision function set $\boldsymbol{\xi}(t)=\{\xi_{ij}^{rh}(t)\}$, we get an equivalent linear form of constraint \eqref{eq-extra-edge1} for $\forall r \in \mathcal{R}, \forall h \in \mathcal{H}_r\backslash \{H_r\},  \forall i \in \mathcal{I}_h^{(r)}, \forall j \in \mathcal{I}_{h+1}^{(r)}$,
{\setlength\belowdisplayskip{-10pt} 
\begin{subequations}
    \be
    	& &\hspace{-1cm}
    	y_{ij}^{rh}(t) = \sum_{e \in \mathcal{E}} \beta_i^e \varphi_j^e \hspace{0.3mm} \xi_{ij}^{rh}(t)
    	\\
    	& &\hspace{-1cm}
   		\xi_{ij}^{rh}(t) \leq  x_i^{rh}(t)
        \\
    	& &\hspace{-1cm}
    	\xi_{ij}^{rh}(t) \leq x_j^{r(h+1)}(t)
        \\
    	& &\hspace{-1cm}
    	\xi_{ij}^{rh}(t) \geq  x_i^{rh}(t) + x_j^{r(h+1)}(t)-1.
    \ee
\label{eq-extra-edge1-linear}    
\end{subequations}
}

\begin{Prop}
If constraint \eqref{eq-core-delay} is replaced by 
\begin{subequations} 
    \be
    	& &\hspace{-1.5cm}
		\pi_{i}^{rh}(t) \hspace{-0.6mm} = \hspace{-0.6mm} \mu_{i}^{rh}(t) \hspace{-0.6mm} - \hspace{-0.6mm} x_i^{rh}(t) \lambda^{(r)}(t) \hspace{-0.6mm} + \hspace{-0.6mm} \epsilon, \hspace{1.5mm} \forall (r,\hspace{-0.3mm} h) \hspace{-0.6mm} \in \hspace{-0.6mm} \mathcal{V}, \forall i \hspace{-0.6mm} \in \hspace{-0.6mm} \mathcal{I}_h^{(r)}
		\label{eq-core-delay-linear-a} 
		\\
    	& &\hspace{-1.5cm}
   		\pi_{i}^{rh}(t) \geq \epsilon, \quad\quad\quad\quad \forall (r,h) \in \mathcal{V}, \forall i \in \mathcal{I}_h^{(r)}
   		\label{eq-core-delay-linear-m} 
    	\\
    	& &\hspace{-1.5cm}
   		d_{i}^{rh}(t) \hspace{0.3mm} \pi_{i}^{rh}(t) \geq c^2, \quad \forall (r,h) \in \mathcal{V}, \forall i \in \mathcal{I}_h^{(r)}
   		\label{eq-core-delay-linear-b} 
   		\\
    	& &\hspace{-1.5cm}
   		c = 1
   		\label{eq-core-delay-linear-c}
    \ee
\label{eq-core-delay-linear}    
\end{subequations}
\hspace{-2mm}where $\boldsymbol{\pi}(t)=\{\pi_{i}^{rh}(t)\}$ is an auxiliary continuous decision function set and $c$ is an auxiliary continuous decision variable, with constraint \eqref{eq-E2E-processing-delay} and \eqref{eq-extra-edge1} linearized, and the remaining other constraints, we obtain an MIQCP problem whose objective function and all constraints except the rotated quadratic cone constraint \eqref{eq-core-delay-linear-b} are linear. The optimality gap between the MIQCP problem and problem \eqref{P1} is zero, i.e., an optimal solution to problem \eqref{P1} is either a unique optimal solution or one of multiple optimal solutions to the MIQCP problem. Given an arbitrary MIQCP optimum (``$\star$''), an optimum of the original problem \eqref{P1} (``$\ast$'') can be obtained by Algorithm~\ref{alg:optimal solution to P1}.
\label{lem:Lemma01}
\end{Prop}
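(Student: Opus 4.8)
The plan is to establish the claim in three stages: (i) verify that the big-$\mathbb{M}$ reformulation \eqref{eq-E2E oricessing delay linear} of \eqref{eq-E2E-processing-delay} and the product linearization \eqref{eq-extra-edge1-linear} of \eqref{eq-extra-edge1} are \emph{exact}, so that replacing the nonlinear constraints by them changes neither the feasible set projected onto the original variables nor the objective; (ii) show that substituting the equality \eqref{eq-core-delay} by the rotated second-order cone system \eqref{eq-core-delay-linear} is a relaxation that leaves the optimal value unchanged; and (iii) deduce the zero optimality gap and the correctness of the recovery procedure in Algorithm~\ref{alg:optimal solution to P1}.

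For stage (i) I would argue as in standard mixed-integer linearization. Since $\boldsymbol{x}(t)$ is binary and, by \eqref{eq-d-bound}, $0<d_{i}^{rh}(t)\le\frac{1}{\epsilon}$ with $\mathbb{M}=\frac1\epsilon$ a valid upper bound, the three inequalities in \eqref{eq-E2E oricessing delay linear} force $\gamma_{i}^{rh}(t)=d_{i}^{rh}(t)$ when $x_{i}^{rh}(t)=1$ and $\gamma_{i}^{rh}(t)=0$ when $x_{i}^{rh}(t)=0$; hence $\gamma_{i}^{rh}(t)=x_{i}^{rh}(t)\,d_{i}^{rh}(t)$ and the summed inequality in \eqref{eq-E2E oricessing delay linear} reduces exactly to \eqref{eq-E2E-processing-delay}. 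Likewise, for binary $x_{i}^{rh}(t)$ and $x_{j}^{r(h+1)}(t)$ the McCormick-type inequalities on $\xi_{ij}^{rh}(t)$ force $\xi_{ij}^{rh}(t)=x_{i}^{rh}(t)\,x_{j}^{r(h+1)}(t)$ (so declaring $\xi$ binary is harmless), whence $y_{ij}^{rh}(t)=\sum_{e\in\mathcal{E}}\beta_i^e\varphi_j^e\,\xi_{ij}^{rh}(t)$ coincides with \eqref{eq-extra-edge1}. Neither transformation couples with the objective, which depends only on $\eta(t)$, $\boldsymbol{m}(t)$, $\boldsymbol{y}(t)$.

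Stage (ii) is the crux. With $c=1$ fixed by \eqref{eq-core-delay-linear-c} and $\pi_{i}^{rh}(t)$ defined by \eqref{eq-core-delay-linear-a}, the original equality \eqref{eq-core-delay} reads $d_{i}^{rh}(t)\,\pi_{i}^{rh}(t)=1=c^2$, whereas the MIQCP imposes only $d_{i}^{rh}(t)\,\pi_{i}^{rh}(t)\ge c^2$; moreover \eqref{eq-rate-bound} already forces $\pi_{i}^{rh}(t)\ge\epsilon>0$ (with $\pi_{i}^{rh}(t)=\epsilon$ and $d_{i}^{rh}(t)=\frac1\epsilon$ automatically in the dummy case $x_{i}^{rh}(t)=0$), so \eqref{eq-core-delay-linear-m} is redundant and the constraint is a genuine rotated quadratic cone. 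Thus every feasible point of \eqref{P1} lifts to a feasible point of the MIQCP with the same objective value by setting $\pi=\mu-x\lambda+\epsilon$, $c=1$, $\gamma=x\,d$, $\xi=x_i x_j$, giving $\text{opt(MIQCP)}\le\text{opt}\eqref{P1}$. For the reverse, take any MIQCP-feasible point and replace each $d_{i}^{rh}(t)$ by $\hat d_{i}^{rh}(t)=1/\pi_{i}^{rh}(t)\le d_{i}^{rh}(t)$ and $\gamma_{i}^{rh}(t)$ by $x_{i}^{rh}(t)\hat d_{i}^{rh}(t)$. Because $d$ appears nowhere else except through the upper-bound constraints \eqref{eq-d-bound} and \eqref{eq-E2E-processing-delay}, decreasing it preserves feasibility; $\hat d_{i}^{rh}(t)>0$ since $\pi_{i}^{rh}(t)\ge\epsilon$; and now \eqref{eq-core-delay} holds with equality, so the modified point is feasible for \eqref{P1} with an unchanged objective. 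Hence the two optimal values are equal.

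Finally, for stage (iii): any optimal solution of \eqref{P1}, lifted as above, is MIQCP-feasible and attains the common optimal value, hence is an optimal solution of the MIQCP (the unique one if the MIQCP optimum is unique, otherwise one among several); conversely, given an arbitrary MIQCP optimum, the reverse construction of stage (ii) — recompute $d_{i}^{rh}(t)$ from \eqref{eq-core-delay} (equivalently $1/\pi_{i}^{rh}(t)$) and discard the auxiliary variables $\boldsymbol{\pi}(t),c,\boldsymbol{\gamma}(t),\boldsymbol{\xi}(t)$ — yields a feasible point of \eqref{P1} with the same optimal objective value, which is exactly the output of Algorithm~\ref{alg:optimal solution to P1}. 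The one delicate point, and the main obstacle, is ruling out that the equality-to-inequality relaxation of \eqref{eq-core-delay} could be exploited to lower the objective; this is precisely what the structural observation ``$d$ enters $O(t)$ nowhere and enters the constraints only through upper bounds'' settles, and it is the single place where problem-specific structure, rather than generic conic reformulation, is invoked.
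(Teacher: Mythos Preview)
Your argument is correct and follows the same line as the paper's (very terse) proof sketch: the paper merely asserts that any MIQCP optimum with inactive constraints in \eqref{eq-core-delay-linear-b} can be mapped to one with all such constraints active without affecting the objective, and you supply precisely this argument in full, correctly exploiting that $d_{i}^{rh}(t)$ enters only upper-bound constraints and not the objective. One small inaccuracy: your stage~(iii) says the recovery ``recompute $d$ from \eqref{eq-core-delay}'' is ``exactly the output of Algorithm~\ref{alg:optimal solution to P1}'', but in the $\alpha_1=0$ branch the algorithm instead lowers $\mu_{i}^{rh}(t)$ (and recomputes $\eta$); this is harmless since your ``decrease $d$'' construction is equally valid for all $\alpha_1$, and the alternative ``decrease $\mu$'' is justified by the symmetric observation that $\mu$ enters the objective only through $\eta$ and enters the remaining constraints monotonically.
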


\begin{proof}
The difference between the MIQCP and the original problem lies in the ``$\geq$'' sign in constraint \eqref{eq-core-delay-linear-b}. If ``$\geq$'' is replaced by ``$=$'', the two problems are equivalent. Actually, an MIQCP optimum with inactive constraints in \eqref{eq-core-delay-linear-b} can always be mapped to another MIQCP optimum (original problem optimum) with active constraints in \eqref{eq-core-delay-linear-b}, without affecting other constraints and the objective value. The mapping algorithm is provided in Algorithm~\ref{alg:optimal solution to P1}. Detailed proof is omitted due to limited space and will be provided in an extended work.
\end{proof}

\begin{algorithm}[t]
\SetKwInOut{Input}{Input}{}
\SetKwInOut{Output}{Output}
\SetKwInput{Define}{Define}
\SetKwInput{Let}{Let}
\SetKwInput{Find}{Find}
\textbf{Input:} $\eta^{\star}$, $\boldsymbol{d}^{\star}$, $\boldsymbol{\mu}^{\star}$,$\boldsymbol{\tau}^{\star}$, $\boldsymbol{B}^{\star}$, $\boldsymbol{x}^{\star}$, $\boldsymbol{w}^{\star}$, $\boldsymbol{m}^{\star}$, $\boldsymbol{y}^{\star}$.\\	
\emph{Variable initialization} ($\ast = \star$).\\
\For{$r \in \mathcal{R}$, $\forall h \in \mathcal{H}_r$, $i \in \mathcal{I}_h^{(r)}$}{ 
\vspace{0.2em}
	\If{$d_{i}^{rh}(t)^{\star} \hspace{0.3mm} \pi_{i}^{rh}(t)^{\star} > (c^{\star})^2$ \emph{and} $x_i^{rh}(t)^{\star}==1$}{
		\lIf{ $\alpha_1>0$}{
			$d_{i}^{rh}(t)^{\ast}=\frac{1}{\mu_{i}^{rh}(t)^{\star}-x_i^{rh}(t)^{\star} \hspace{0.3mm} \lambda^{(r)}(t)+\epsilon}$
		}		
		\lIf{ $\alpha_1==0$}{
			$\mu_{i}^{rh}(t)^{\ast}=\frac{1}{d_{i}^{rh}(t)^{\star}}+x_i^{rh}(t)^{\star} \hspace{0.3mm} \lambda^{(r)}(t)-\epsilon$
		}  
	}
}
\lIf{$\alpha_1==0$}{calculate $\eta(t)^{\ast}$ ($\eta(t)^{\ast}\neq \eta(t)^{\star}$)}
\textbf{Output:} $\eta^{\ast}$, $\boldsymbol{d}^{\ast}$, $\boldsymbol{\mu}^{\ast}$,$\boldsymbol{\tau}^{\ast}$, $\boldsymbol{B}^{\ast}$, $\boldsymbol{x}^{\ast}$, $\boldsymbol{w}^{\ast}$, $\boldsymbol{m}^{\ast}$, $\boldsymbol{y}^{\ast}$\\	
\caption{Post-processing to MIQCP optimum}
\label{alg:optimal solution to P1}
\end{algorithm}

\vspace{-3mm}
\section{Numerical Results}

\begin{figure*}
    \centering
    \setlength{\belowcaptionskip}{-0.12cm}
    \begin{subfigure}[b]{0.31\textwidth}
        \includegraphics[width=\textwidth]{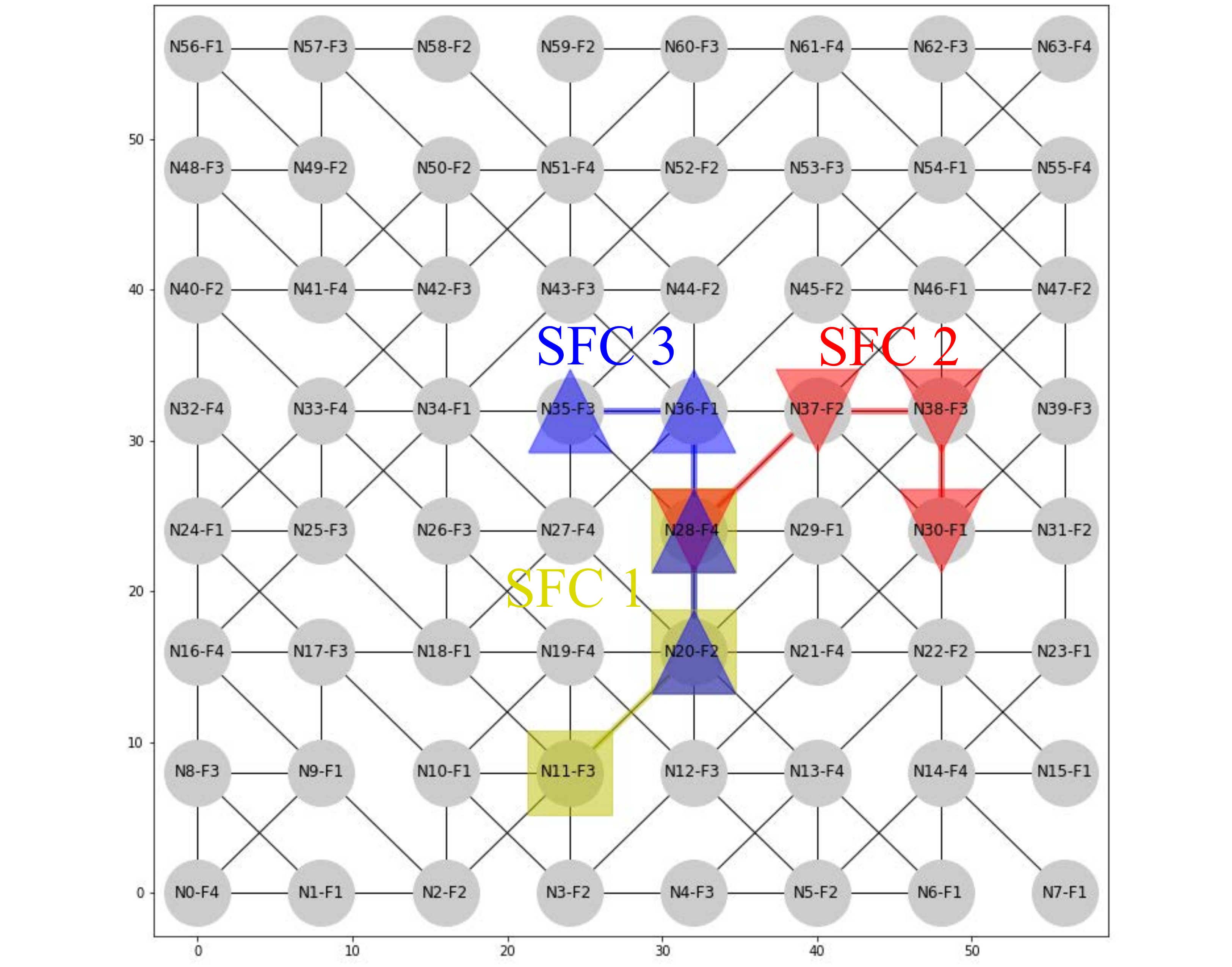}
        \caption{\footnotesize{The VNFI network and initial SFC mapping}}
    \end{subfigure}
    ~
    \begin{subfigure}[b]{0.31\textwidth}
        \includegraphics[width=\textwidth]{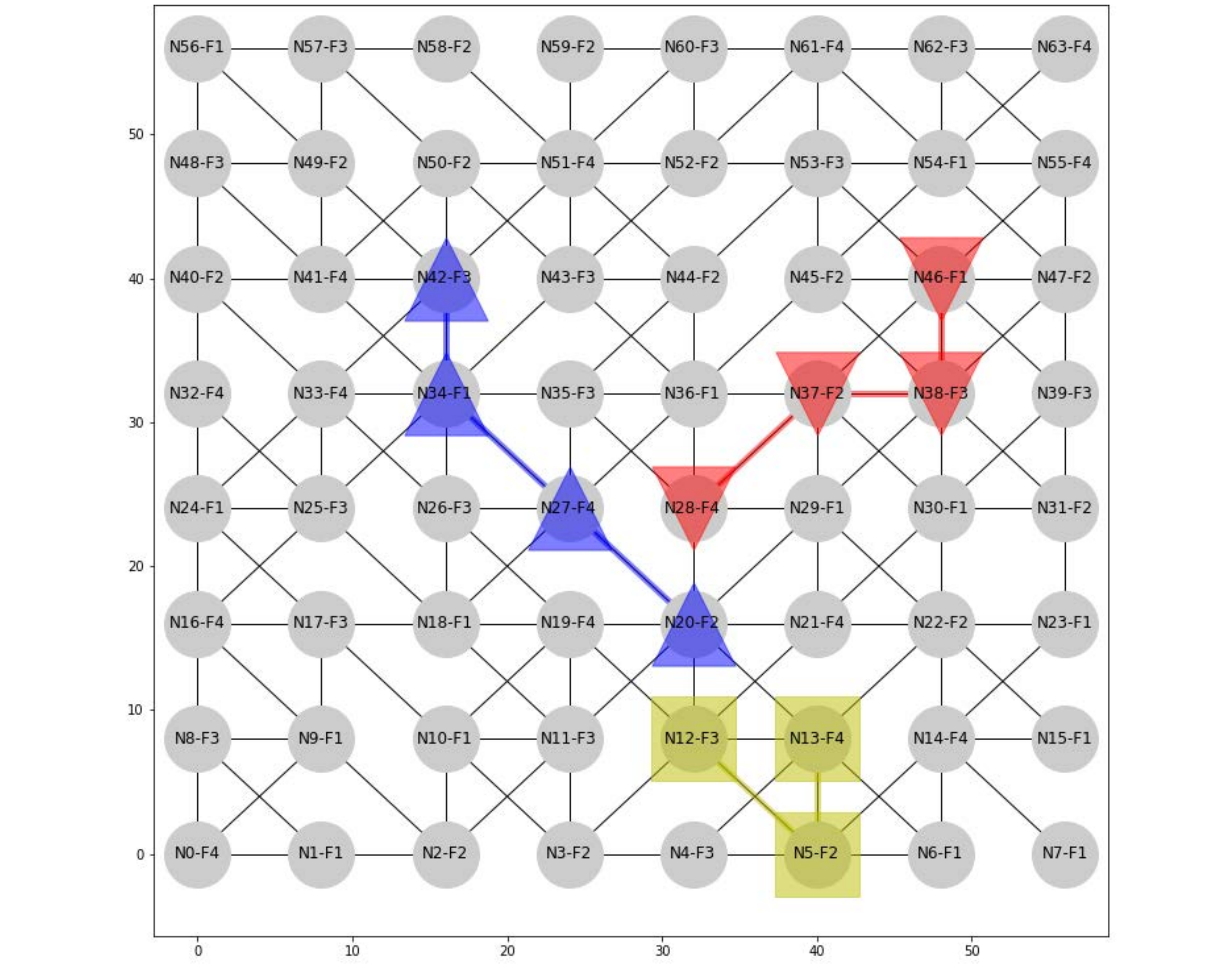}
        \caption{\footnotesize{Free migrations (LBFM)}}
    \end{subfigure}
    ~
    \begin{subfigure}[b]{0.31\textwidth}
        \includegraphics[width=\textwidth]{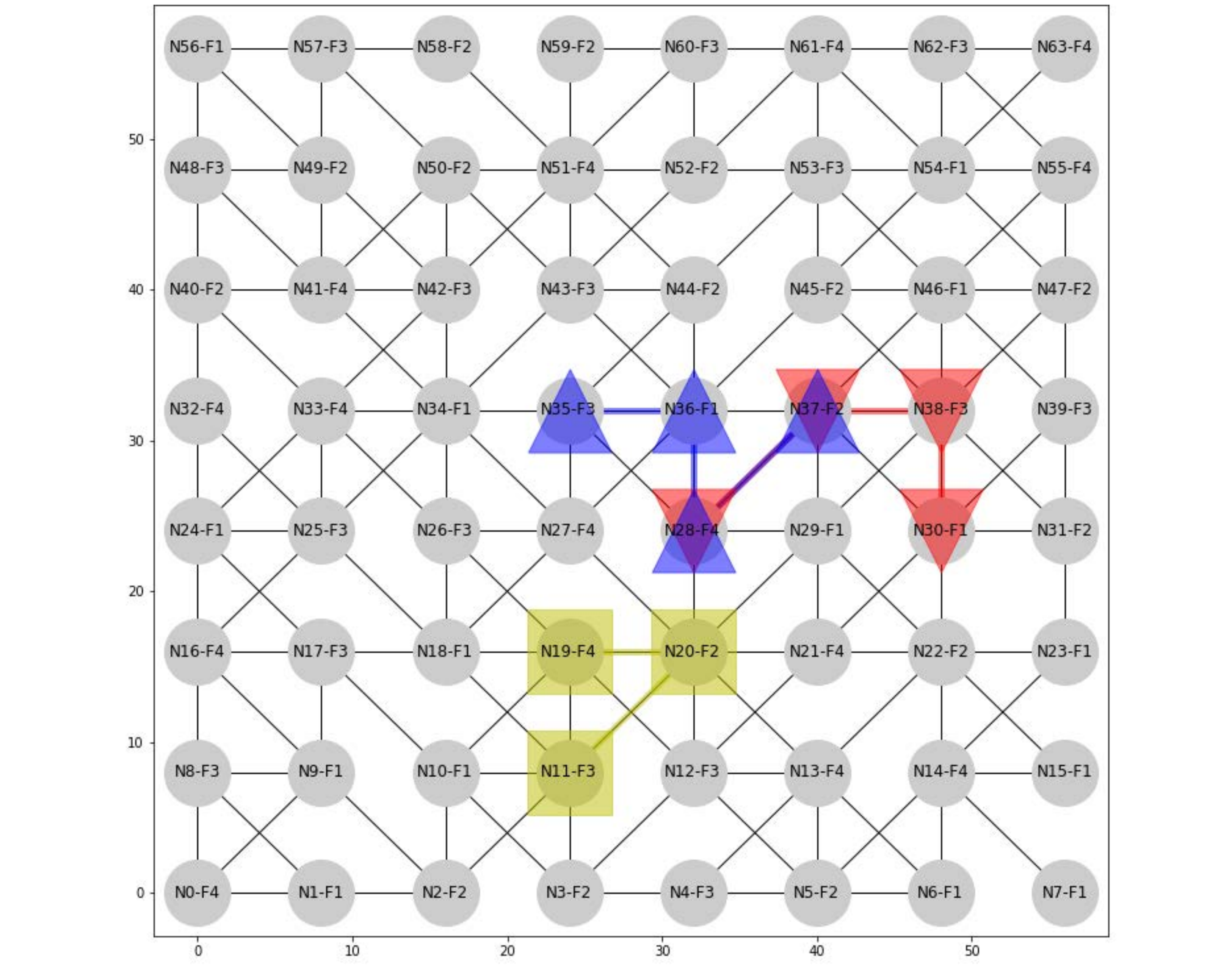}
        \caption{\footnotesize{Limited migrations (MOFM, HFM)}}
    \end{subfigure}
    \caption{VNFI network and SFC (re)mapping results at different flow migration strategies}\label{fig:VNFI network and mapping remapping results}
\end{figure*}

\begin{figure*}
    \centering
    \setlength{\belowcaptionskip}{-0.12cm}
    \begin{subfigure}[b]{0.324\textwidth}
        \includegraphics[width=\textwidth]{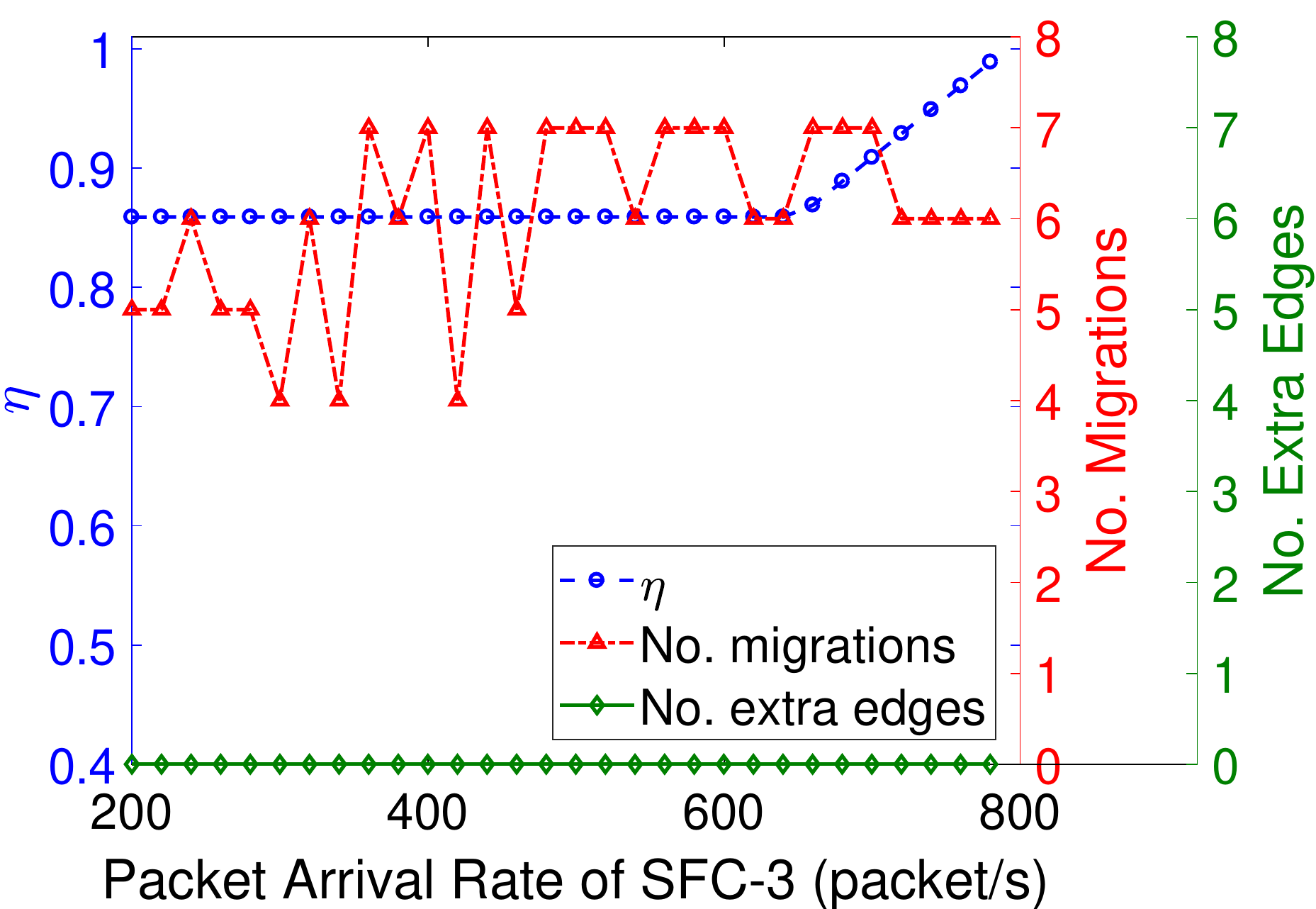}
        \caption{\footnotesize{LBFM strategy}}
    \end{subfigure}
    ~ 
    \hspace{-2mm}
    \begin{subfigure}[b]{0.324\textwidth}
        \includegraphics[width=\textwidth]{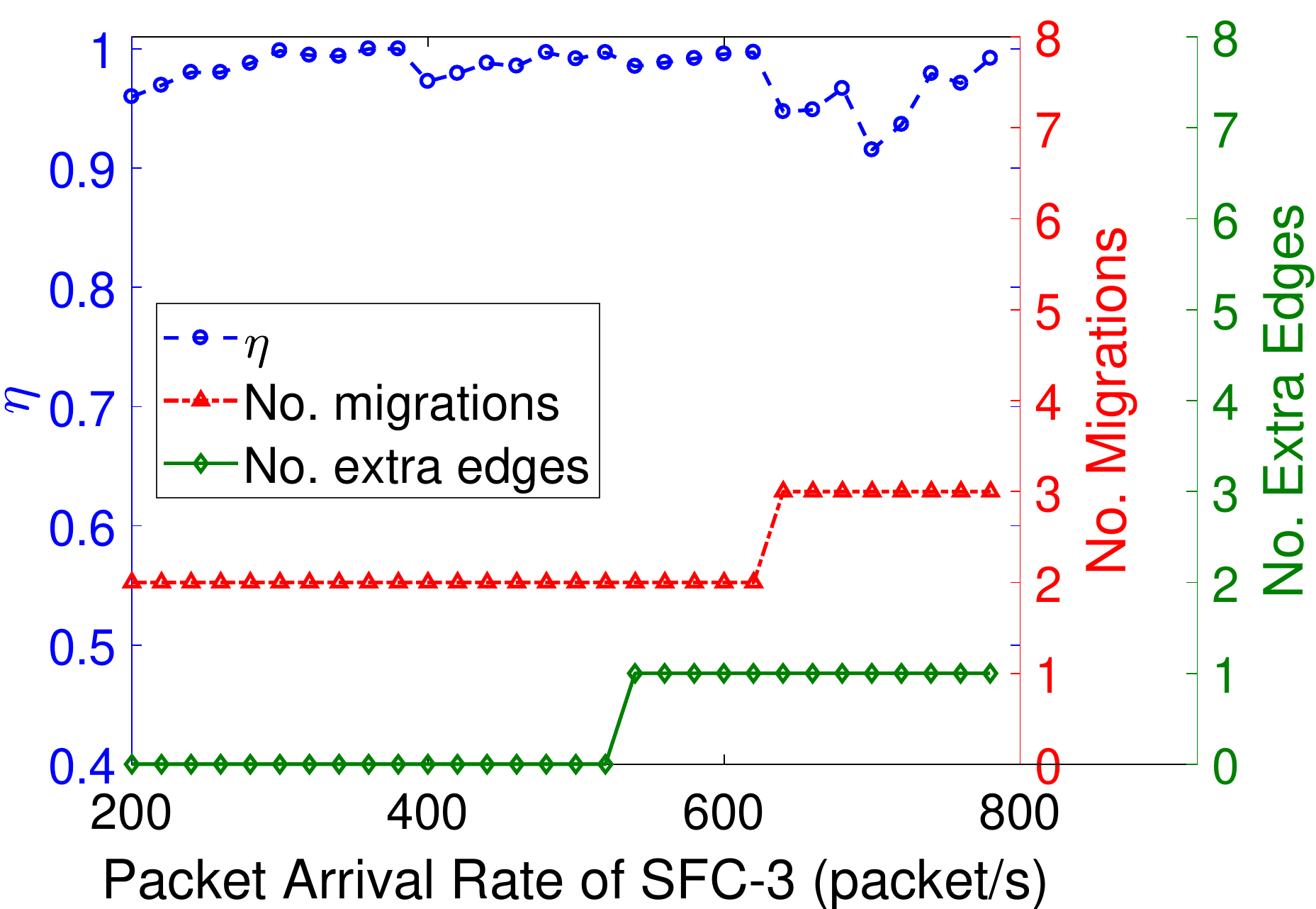}
        \caption{\footnotesize{MOFM strategy}}    
    \end{subfigure}
    ~ 
    \hspace{-2mm}
    \begin{subfigure}[b]{0.324\textwidth}
        \includegraphics[width=\textwidth]{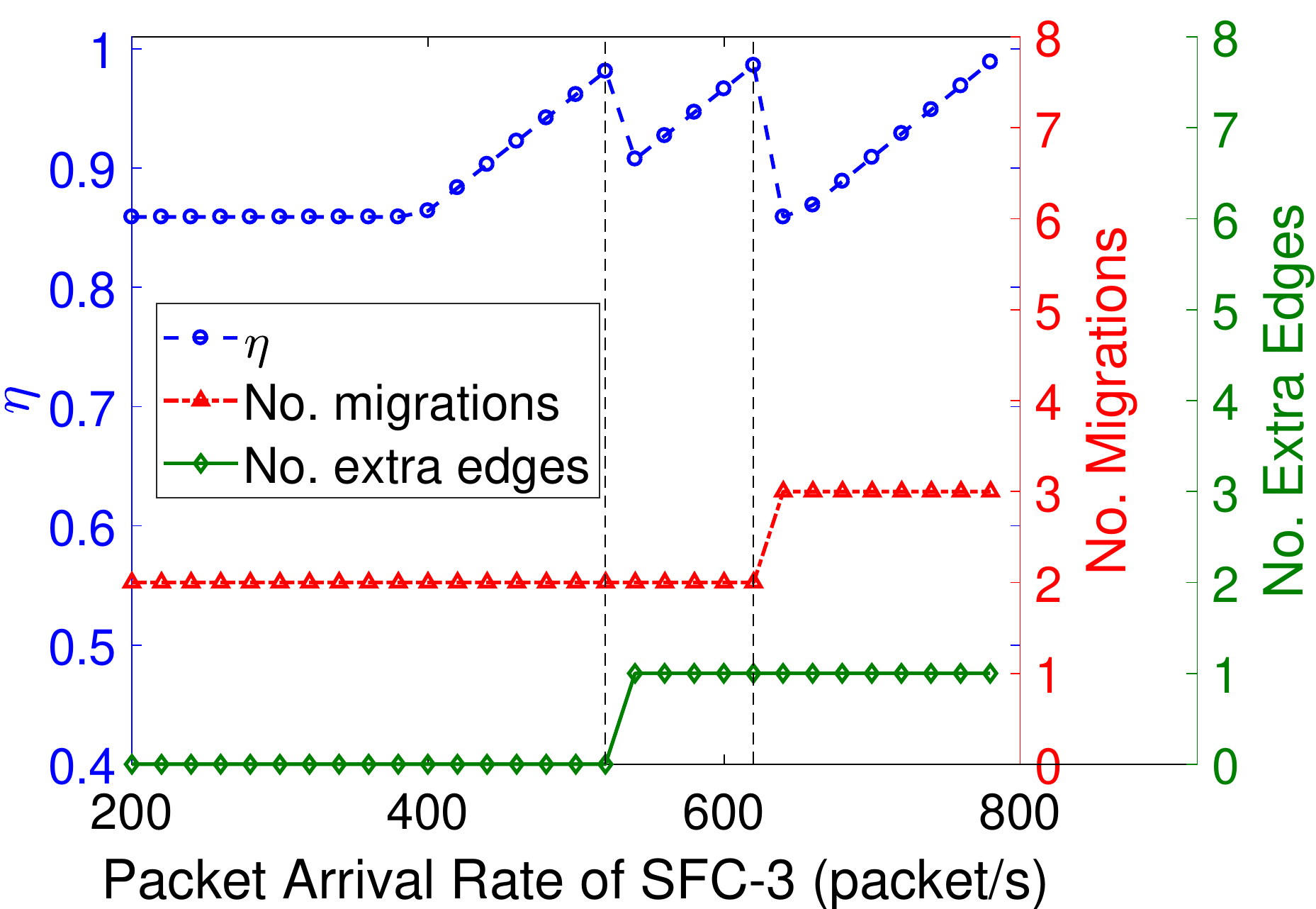}
        \caption{\footnotesize{HFM strategy}}
    \end{subfigure}
    \caption{Performance of three flow migration strategies with the increase of $\lambda^{(3)}(t)$ at $\lambda^{(1)}(t)=700$ packet/s}\label{fig:Performance of three flow migration strategies at large SFC1 traffic}
\end{figure*}

In this section, we present some numerical results for the flow migration problem solved using Algorithm~\ref{alg:optimal solution to P1}. We use a mesh topology in Fig.~\ref{fig:VNFI network and mapping remapping results} for the VNFI network, with 64 VNFIs. Four VNF types are considered, i.e., $\mathcal{F}=\{f_1,f_2,f_3,f_4\}$, with a random distribution in the network. The number of instances belonging to the four types are respectively 16, 16, 17 and 15. In terms of processing capacity, the VNFIs are homogeneous, with a processing rate of 1000 packet/s. Logical edges exist only between neighboring VNFIs with different types. There are 3 SFCs, including SFC 1 ($f_4 \to f_2 \to f_3$), SFC 2 ($f_1 \to f_3 \to f_2 \to f_4$) and SFC 3 ($f_3 \to f_1 \to f_4 \to f_2$), initially mapped to the VNFI network as shown in Fig.~\ref{fig:VNFI network and mapping remapping results}(a). Their sources and destinations are omitted for clarity. According to the initial mapping, VNFI ``N28-F4'' is shared by all three SFCs, and VNFI ``N20-F2'' is shared by SFC 1 and SFC 3. The E2E delay requirement of each SFC is set to 20 \emph{ms}.

\subsection{Load Balancing and Reconfiguration Overhead Trade-off}

We evaluate the performance of Algorithm~\ref{alg:optimal solution to P1} with varying traffic load under three sets of weights in~\eqref{eq-Obj}, to investigate the trade-off between load balancing and reconfiguration overhead. Performance metrics include $\eta(t)$, number of migrations, and number of extra logical edges for flow rerouting. Throughout the experiments, we fix $\alpha_3=0.2$, and explore three sets of \{$\alpha_1$, $\alpha_2$\}: \{0.8, 0\}, \{0, 0.8\} and \{0.4, 0.4\}, to focus on the trade-off between load balancing and number of migrations. For \{$\alpha_1$, $\alpha_2$\} = \{0.8, 0\}, the number of migrations is not optimized but load balancing is the focus, corresponding to a load balancing flow migration (LBFM) strategy. For \{$\alpha_1$, $\alpha_2$\} = \{0, 0.8\}, $\eta(t)$ is not optimized but migration reduction is emphasized, corresponding to a minimum overhead flow migration (MOFM) strategy. For \{$\alpha_1$, $\alpha_2$\} = \{0.4, 0.4\}, load balancing and migration reduction are both important, corresponding to a hybrid flow migration (HFM) strategy.

For traffic load, we fix $\lambda^{(2)}(t)=200$ packet/s, and vary both $\lambda^{(1)}(t)$ and $\lambda^{(3)}(t)$ from 200 packet/s to 780 packet/s. Beyond 780 packet/s, the problem becomes infeasible due to the processing resource constraints and the E2E processing delay constraints. Fig.~\ref{fig:VNFI network and mapping remapping results}(b)(c) illustrate the SFC remapping results of three flow migration strategies at $\{\lambda^{(1)}(t),\lambda^{(3)}(t)\}=\{700,700\}$ packet/s. For the LBFM strategy, SFCs completely separate from each other even at a relatively low traffic load to maximally balance traffic load in the VNFI network. However, only a limited number of migrations are observed in the MOFM and HFM strategies. Furthermore, we observe at most one extra logical edge for flow rerouting at different traffic loads throughout the experiment, verifying the effectiveness of the selected weight $\alpha_3=0.2$ for the third objective. Fig.~\ref{fig:Performance of three flow migration strategies at large SFC1 traffic} shows the performance of three flow migration strategies with variations of $\lambda^{(3)}(t)$ when $\lambda^{(1)}(t)$ is equal to 700 packet/s.

\begin{figure} 
\centering
\setlength{\belowcaptionskip}{-0.1cm}
    \begin{subfigure}[b]{0.35\textwidth}
        \includegraphics[width=\textwidth]{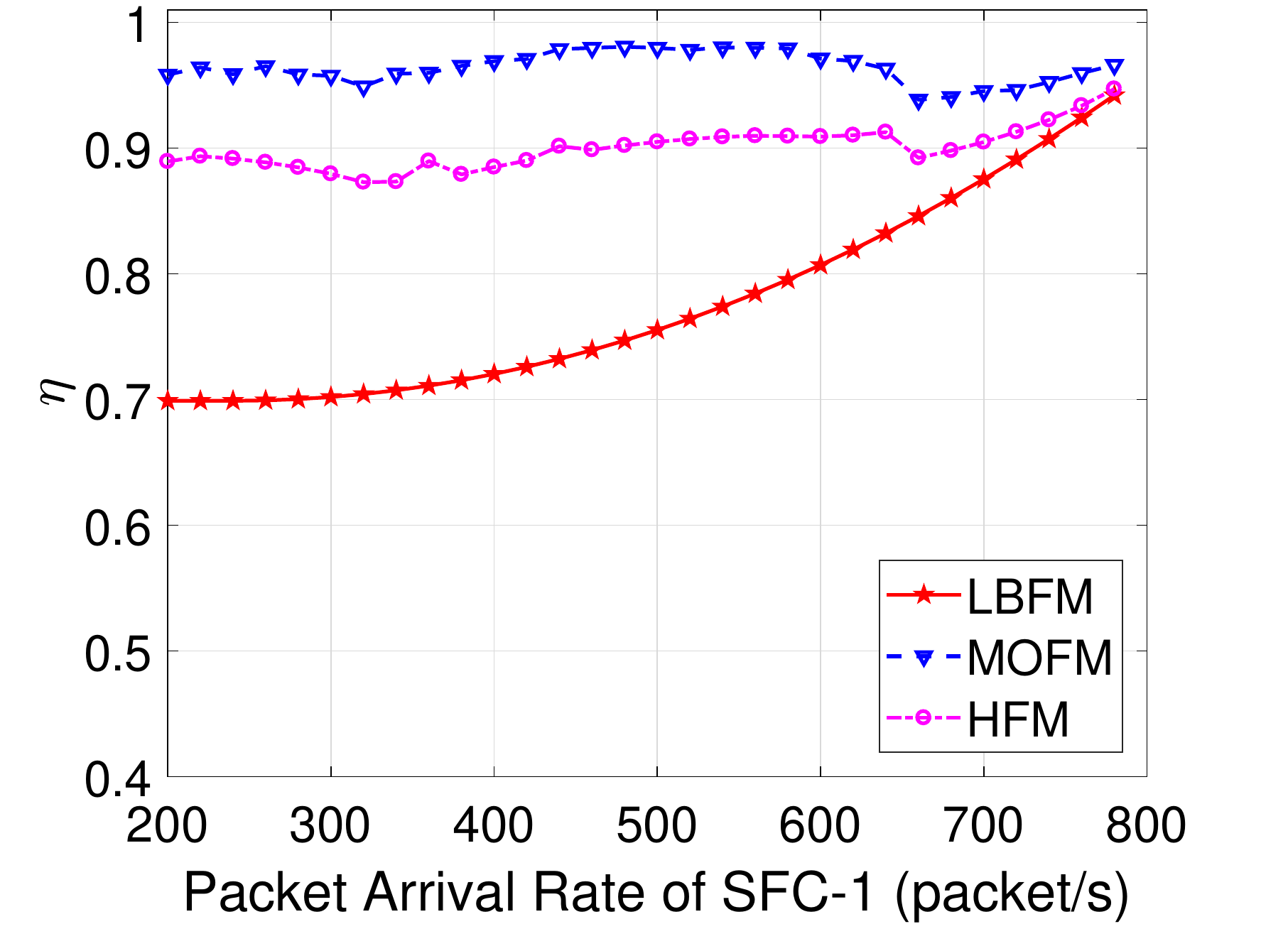}
    \end{subfigure}
    ~
    \begin{subfigure}[b]{0.35\textwidth}
        \includegraphics[width=\textwidth]{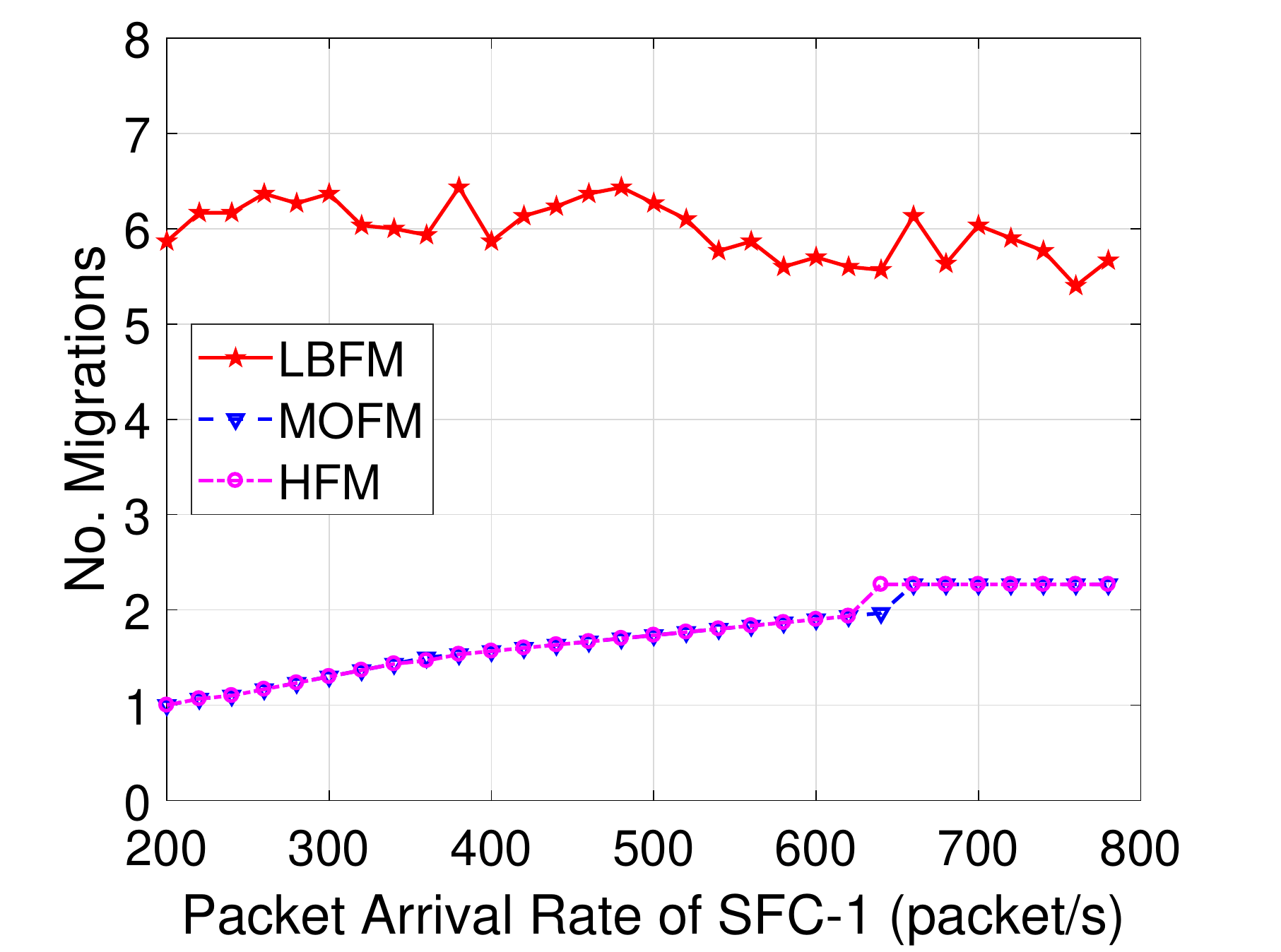}
    \end{subfigure}
\caption{Performance comparison of three flow migration strategies}\label{fig:Performance comparison of three flow migration strategies}
\end{figure} 

\subsubsection{LBFM strategy}

The maximum VNFI utilization ratio $\eta(t)$ is dominated by SFC 1 when $\lambda^{(3)}(t)$ is relatively small, showing a flat trend first, but turns to be dominated by SFC 3 with the increase of $\lambda^{(3)}(t)$. The number of migrations is high and random at different traffic loads. 

\subsubsection{MOFM strategy}

The number of migrations shows a step-wise increasing trend with the increase of $\lambda^{(3)}(t)$, taking a value from set $\{0,1,2,3\}$.  The maximum VNFI utilization $\eta(t)$ is close to 1 and shows a random relationship with the traffic load, since it is not optimized in the MOFM strategy. 

\subsubsection{HFM Strategy}

A trade-off between $\eta(t)$ and the other two objectives is observed. At a given $\lambda^{(1)}(t)$, $\eta(t)$ drops sharply when the number of migrations or the number of extra edges is increased by 1 with the increase of $\lambda^{(3)}(t)$. However, when the number of migrations and the number of extra edges stay stable, the curve of $\eta(t)$ shows either a linear increasing trend or a flat trend, with an increase of $\lambda^{(3)}(t)$. 

\subsubsection{Performance Comparison}

For the trade-off between load balancing and the number of migrations, Fig.~\ref{fig:Performance comparison of three flow migration strategies} shows the performance comparison of the LBFM, MOFM and HFM strategies as $\lambda^{(1)}(t)$ increases. For a given $\lambda^{(1)}(t)$ value, each performance metric is averaged across $\lambda^{(3)}(t) \in [200,780]$ packet/s. The number of migrations of the HFM strategy is comparable with that of the MOFM strategy, but it is much less than that of the LBFM strategy. At the same time, $\eta(t)$ of the HFM strategy is kept at a medium level as compared with the other two strategies.
\vspace{-3mm}

\subsection{End-to-End Delay Performance}

\begin{figure} 
\centering
    \begin{subfigure}[b]{0.355\textwidth}
        \includegraphics[width=\textwidth]{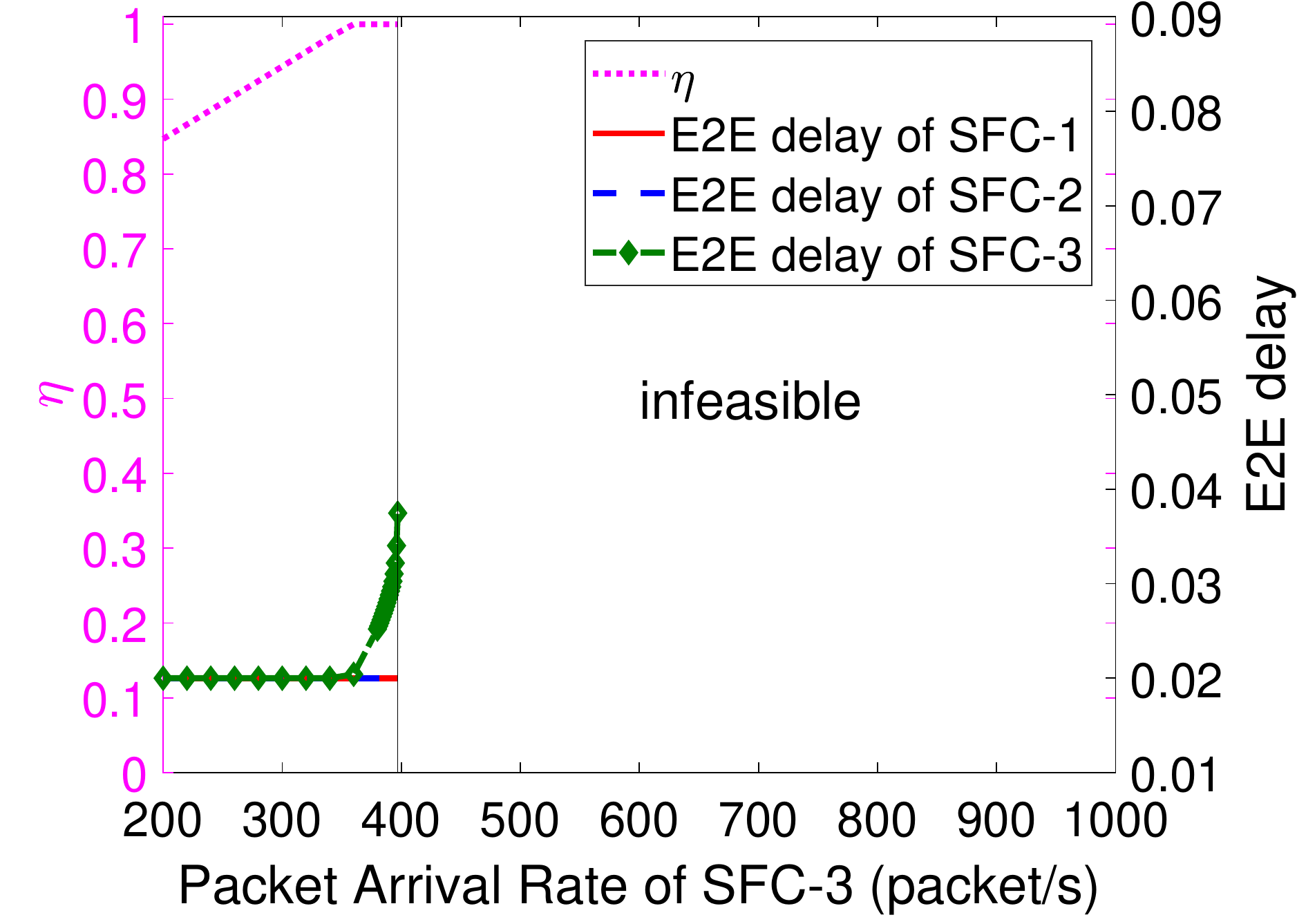}
        \caption{\footnotesize{Without flow migration}}
    \end{subfigure}
    ~
    \begin{subfigure}[b]{0.355\textwidth}
        \includegraphics[width=\textwidth]{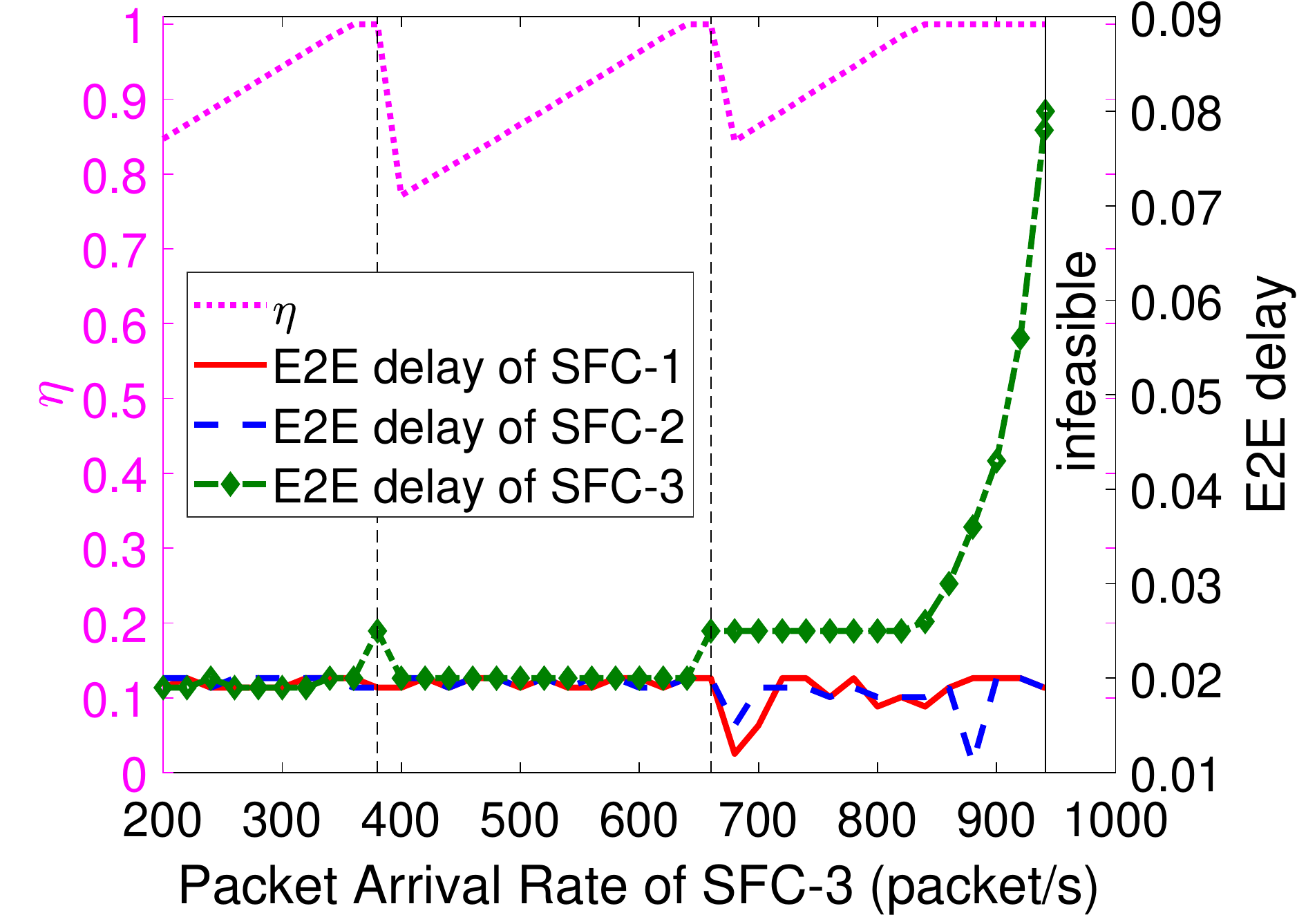}
        \caption{\footnotesize{With flow migration}}
    \end{subfigure}
\caption{End-to-end delay performance comparison}\label{fig:E2E delay performance}
\end{figure}

To evaluate benefits from flow migration on the E2E delay performance, we compare the delay with and without flow migration when the traffic rates increase. We set $\lambda^{(1)}(t)=\lambda^{(2)}(t)=200$ packet/s and increase $\lambda^{(3)}(t)$ from $200$ packet/s. Instead of the hard E2E delay constraints in \eqref{P1}, we allow some margins $\boldsymbol{\delta}=\{\delta^{(r)}\}$ to the E2E delay requirements $\sum_{h \in \mathcal{H}_r} \sum_{i \in \mathcal{I}_h^{(r)}} {x_i^{rh}(t)\hspace{0.3mm}d_{i}^{rh}(t)} \leq D^{(r)}+\delta^{(r)}$, and additionally minimize the aggregate delay margins in the objective function $O(t)+ \alpha_4 \sum_{r \in \mathcal{R}}{\delta^{(r)}}$. We confine $\delta^{(1)}=\delta^{(2)}=0$, and only optimize $\delta^{(3)}$. Fig.~\ref{fig:E2E delay performance} shows the E2E delay performance without and with flow migration. With the zigzag trend of $\eta(t)$, we can see at which traffic rates flow migrations happen. Benefiting from flow migrations, the feasible set of $\lambda^{(3)}(t)$ enlarges from [200, 397] packet/s to [200, 941] packet/s. 
\vspace{-2mm}

\section{Conclusion}
In this paper, we study a delay-aware flow migration problem for embedded SFCs in a common physical network in SDN/NFV enabled 5G systems. A mixed integer optimization problem is formulated based on a VNFI network abstraction, addressing the trade-off between load balancing and reconfiguration overhead, under service chaining requirements, processing resource constraints, and E2E delay constraints. The problem is non-convex and non-solvable in optimization solvers, so we reformulate a tractable MIQCP problem based on which the optimal solution to the original problem can be obtained. Numerical results show that the proposed model accommodates more traffic from the services which originally share some processing resources on their E2E paths, in comparison with a static SFC configuration without flow migrations. Moreover, a flow migration strategy with similar priority in load balancing and minimum number of migrations achieves medium load balancing, as compared with flow migration strategies with a bias on either goal. Nevertheless, it achieves approximately as good performance in terms of number of migrations as a flow migration strategy which is biased on migration reduction. This result indicates the benefit of joint consideration of the two goals.

\bibliographystyle{IEEEtran}
\bibliography{Reference_ICC19}

\begin{thebibliography}{10}
\providecommand{\url}[1]{#1}
\csname url@samestyle\endcsname
\providecommand{\newblock}{\relax}
\providecommand{\bibinfo}[2]{#2}
\providecommand{\BIBentrySTDinterwordspacing}{\spaceskip=0pt\relax}
\providecommand{\BIBentryALTinterwordstretchfactor}{4}
\providecommand{\BIBentryALTinterwordspacing}{\spaceskip=\fontdimen2\font plus
\BIBentryALTinterwordstretchfactor\fontdimen3\font minus
  \fontdimen4\font\relax}
\providecommand{\BIBforeignlanguage}[2]{{%
\expandafter\ifx\csname l@#1\endcsname\relax
\typeout{** WARNING: IEEEtran.bst: No hyphenation pattern has been}%
\typeout{** loaded for the language `#1'. Using the pattern for}%
\typeout{** the default language instead.}%
\else
\language=\csname l@#1\endcsname
\fi
#2}}
\providecommand{\BIBdecl}{\relax}
\BIBdecl

\bibitem{Ye2018slicing}
Q.~Ye, J.~Li, K.~Qu, W.~Zhuang, X.~Shen, and X.~Li, ``End-to-end quality of
  service in {5G} networks -- {Examining} the effectiveness of a network
  slicing framework,'' \emph{IEEE Veh. Technol. Mag.}, vol.~13, no.~2, pp.
  65--74, 2018.

\bibitem{herrera2016resource}
J.~G. Herrera and J.~F. Botero, ``Resource allocation in {NFV}: {A}
  comprehensive survey,'' \emph{IEEE Trans. Netw. Service Manag.}, vol.~13,
  no.~3, pp. 518--532, 2016.

\bibitem{nguyen2017sdn}
V.-G. Nguyen, A.~Brunstrom, K.-J. Grinnemo, and J.~Taheri, ``{SDN/NFV}-based
  mobile packet core network architectures: {A} survey,'' \emph{IEEE Commun.
  Surveys Tuts.}, vol.~19, no.~3, pp. 1567--1602, 2017.

\bibitem{cohen2015near}
R.~Cohen, L.~Lewin-Eytan, J.~S. Naor, and D.~Raz, ``Near optimal placement of
  virtual network functions,'' in \emph{Proc. IEEE INFOCOM' 15}, 2015, pp.
  1346--1354.

\bibitem{li2017joint}
J.~Li, N.~Zhang, Q.~Ye, W.~Shi, W.~Zhuang, and X.~Shen, ``Joint resource
  allocation and online virtual network embedding for {5G} networks,'' in
  \emph{Proc. IEEE GLOBECOM' 17}, 2017, pp. 1--6.

\bibitem{Omar_toappear}
O.~{Alhussein}, P.~T. {Do}, J.~{Li}, Q.~{Ye}, W.~{Shi}, W.~{Zhuang}, X.~{Shen},
  X.~{Li}, and J.~{Rao}, ``Joint vnf placement and multicast traffic routing in
  5g core networks,'' in \emph{Proc. IEEE GLOBECOM' 18}, 2018, pp. 1--6.

\bibitem{Qiang_toappear}
Q.~Ye, W.~Zhuang, X.~Li, and J.~Rao, ``End-to-end delay modeling for embedded
  vnf chains in 5{G} core networks,'' \emph{IEEE Internet of Things J.}, to
  appear, doi: 10.1109/JIOT.2018.2853708.

\bibitem{liu2017dynamic}
J.~Liu, W.~Lu, F.~Zhou, P.~Lu, and Z.~Zhu, ``On dynamic service function chain
  deployment and readjustment,'' \emph{IEEE Trans. Netw. Service Manag.},
  vol.~14, no.~3, pp. 543--553, 2017.

\bibitem{rankothge2017optimizing}
W.~Rankothge, F.~Le, A.~Russo, and J.~Lobo, ``Optimizing resource allocation
  for virtualized network functions in a cloud center using genetic
  algorithms,'' \emph{IEEE Trans. Netw. Service Manag.}, vol.~14, no.~2, pp.
  343--356, 2017.

\bibitem{eramo2017approach}
V.~Eramo, E.~Miucci, M.~Ammar, and F.~G. Lavacca, ``An approach for service
  function chain routing and virtual function network instance migration in
  network function virtualization architectures,'' \emph{IEEE/ACM Trans.
  Netw.}, vol.~25, no.~4, pp. 2008--2025, 2017.

\bibitem{guo2016dynamic}
L.~Guo, J.~Pang, and A.~Walid, ``Dynamic service function chaining in
  {SDN}-enabled networks with middleboxes,'' in \emph{Proc. IEEE ICNP' 16},
  2016, pp. 1--10.

\bibitem{clark2005live}
C.~Clark, K.~Fraser, S.~Hand, J.~G. Hansen, E.~Jul, C.~Limpach, I.~Pratt, and
  A.~Warfield, ``Live migration of virtual machines,'' in \emph{Proc. ACM
  USENIX' 05}, 2005, pp. 273--286.

\bibitem{gember2014opennf}
A.~Gember-Jacobson, R.~Viswanathan, C.~Prakash, R.~Grandl, J.~Khalid, S.~Das,
  and A.~Akella, ``{OpenNF}: {Enabling} innovation in network function
  control,'' in \emph{Proc. ACM SIGCOMM' 14}, vol.~44, no.~4, 2014, pp.
  163--174.

\bibitem{nobach2017statelet}
L.~Nobach, I.~Rimac, V.~Hilt, and D.~Hausheer, ``Statelet-based efficient and
  seamless {NFV} state transfer,'' \emph{IEEE Trans. Netw. Service Manag.},
  vol.~14, no.~4, pp. 964--977, 2017.

\bibitem{gurobi}
\BIBentryALTinterwordspacing
L.~Gurobi~Optimization, ``Gurobi optimizer reference manual,'' 2018. [Online].
  Available: \url{http://www.gurobi.com}
\BIBentrySTDinterwordspacing

\bibitem{shin2008dual}
M.~Shin, S.~Chong, and I.~Rhee, ``Dual-resource {TCP/AQM} for
  processing-constrained networks,'' \emph{IEEE/ACM Trans. Netw.}, vol.~16,
  no.~2, pp. 435--449, 2008.

\bibitem{burer2012non}
S.~Burer and A.~N. Letchford, ``Non-convex mixed-integer nonlinear programming:
  {A} survey,'' \emph{Surveys in Operations Research and Management Science},
  vol.~17, no.~2, pp. 97--106, 2012.

\end{thebibliography}

\end{document}